\newcommand{\masoud}[1]{  \ifthenelse{\boolean{showcomments}}
{ \textcolor{red}{(Masoud says:  #1)}} {}  }
\newcommand{\slow}[1]{\ifthenelse{\boolean{showcomments}}
{ \textcolor{red}{(Steven says:  #1)}}{}}
\newcommand{\lijun}[1]{\ifthenelse{\boolean{showcomments}}
{ \textcolor{blue}{(Lijun says:  #1)}}{}}
\def\ba{\begin{array}}
\def\ea{\end{array}}
\newcommand{\beq}{\begin{align}}
\newcommand{\eeq}{\end{align}}
\newcommand{\bq}{\begin{eqnarray}}
\newcommand{\eq}{\end{eqnarray}}
\newcommand{\bqn}{\begin{eqnarray*}}
\newcommand{\eqn}{\end{eqnarray*}}
\newcommand{\bee}{\begin{enumerate}}
\newcommand{\eee}{\end{enumerate}}
\newcommand{\bi}{\begin{itemize}}
\newcommand{\ei}{\end{itemize}}
\newcommand{\btab}{\begin{tabular}}
\newcommand{\etab}{\end{tabular}}
\newtheorem{theorem}{Theorem}
\newtheorem{lemma}{Lemma}
\newtheorem{definition}{Definition}
\newtheorem{proposition}{Proposition}
\newcommand{\hL}{\mathcal{L}}
\newcommand{\hN}{\mathcal{N}}
\begin{document}

\title{Pseudo-gradient Based Local Voltage Control in Distribution Networks}

\author{\IEEEauthorblockN{Xinyang Zhou\IEEEauthorrefmark{2}, Masoud Farivar\IEEEauthorrefmark{1} and 
Lijun Chen\IEEEauthorrefmark{2}
\\
\IEEEauthorblockA{
\IEEEauthorblockA{\IEEEauthorrefmark{2}College of Engineering and Applied Sciences, University
of Colorado, Boulder} \\ \IEEEauthorrefmark{1}Department of Electrical Engineering,
Caltech, Pasadena}}
}

\maketitle
\thispagestyle{empty}
\pagestyle{empty}

\begin{abstract}
Voltage regulation is critical for power grids. However, it has become a much more challenging problem as distributed energy resources (DERs) such as photovoltaic and wind generators are increasingly deployed, causing rapid voltage fluctuations beyond what can be handled by the traditional voltage regulation methods. In this paper, motivated by two previously proposed inverter-based local volt/var control algorithms, we propose a pseudo-gradient based voltage control algorithm for the distribution network that does not constrain the allowable control functions and has low implementation complexity. We characterize the convergence of the proposed voltage control scheme, and compare it against the two previous algorithms in terms of the convergence condition as well as the convergence rate. \end{abstract}

%\begin{IEEEkeywords}
%Photovoltaic inverters, voltage regulation, convex
%relaxation, loss minimization, reactive power compensation,
%stochastic approximation, optimal power flow.
%\end{IEEEkeywords}
\section{notation}

%\begin{small}
        \begin{tabular}{ll}   
        $t$ & time index, $t \in \mathcal{T} := \{1, 2, \ldots,\infty\}$\\
        $\hN$ & set of buses excluding bus $0$, $\hN:=\{1, ..., n\}$\\
        % & $bus_0$ is the slack bus with a fixed 1pu voltage\\
        $\hL$ & set of power lines\\
        $\hL_i$ & set of the lines form bus 0 to bus i \\ 
        $p_i^c, q_i^c$ & real, reactive power consumption at bus $i$\\
        $p_i^g, q_i^g$& real, reactive power generation at bus $i$\\
        $P_{ij}, Q_{ij}$ & real and reactive power flow from $i$ to $j$\\
        $r_{ij}, x_{ij}$ & resistance and reactance of line $(i,j)$\\
        $V_i$ & complex voltage at bus $i$\\
        $v_i$  &    $v_i := |V_i|, ~~  i \in \hN$\\
        $I_{ij}$ & complex current from $i$ to $j$\\
        $\ell_{ij}$ & $\ell_i := |I_{ij}|^2, ~~  (i,j) \in \hL$\\
%        $\beta (j)\subset \hN$ & set of all descendants of  bus $j$ \\
%        & $\beta(j)=\left\{ i | \hL_j \subseteq \hL_i\right\}$ \\
        %$A\prec B$ &  means $A-B$ is positive definite\\
        %& for Hermitian matrices $A,B$\\
        $x^+$ &  positive part,  $ x^+ = max\left\{0,x\right\}$\\
        $[x]_a^b$ & $[x]_a^b = x + (a-x)^+ -(x-b)^+ $\\      
        $\lambda_{max}$ & the maximum eigenvalue\\
        \end{tabular}
        
 \vspace{3mm}
 A quantity without subscript is usually a vector with appropriate components defined earlier, 
e.g., $v := (v_i, i\in\hN), q^g := (q_i^g, i\in \hN)$.

\section{introduction}
Both developed and developing countries \cite{Sawin14} have been deploying large amount of renewable generations like photovoltaic (PV) and wind generators, to keep up with their ever-growing power demand, and to ease environmental problems as well. Mitigating the pressure on global environment by turning to clean energy as those distributed energy resources (DERs) are, they are doing the contrary to existing power distribution networks. Because of their innate properties of instability, renewable generations bring about rapid changes (in seconds) in voltage to the networks, considerably beyond the reach of traditional voltage regulation with capacity buses, operating at a frequency of hours.

The new IEEE Standard 1547 suggests inverter-based volt/var control in distribution systems \cite{new1547a, new1547b}. Smart inverters can be built in various sources. Take solar inverters for example: for most time, solar generations can't reach their maximum output power, and the rest available capacity can be used to absorb or generate reactive power, and inject it to the networks to regulate the voltage levels. This can be operated fast enough to keep up with the rapid voltage fluctuation and compensate it. Extensive study has been done to justify this inverter-based volt/var control\cite{chertkov11}-\cite{Sandia13}.

%Though central control algorithms \cite{}, and distributed message-passing algorithm \cite{} with communication among buses give a lot of insight about this problem theoretically, since most buses in power grids are dumb nodes, communication is practically difficult. We will focus on distributed local control here, making regulation decision based on local information only, without communication even between neighbors.

As a closed loop control, inverter-based volt/var control can drive voltages to desired values, by mapping current state $(v(t), q(t))$ to new reactive power injections. Former works mainly suggest two algorithms for voltage regulation, known as non-incremental algorithm and gradient algorithm. Non-incremental algorithm requires only the local voltage value to decide its reactive power injections without direct knowledge of its previous decisions \cite{Masoud-CDC13}, but oscillation problem is proved and observed, stemming from its restricted convergence condition\cite{Jahangiri13,FZC2015}. 
Gradient algorithm, carried out as an incremental algorithm, demands information of current local voltage value, as well as previous decision on reactive power injections. It exhibits better convergence properties, less restricted by parameters of networks and control functions\cite{FZC2015}.
However, implementation of gradient algorithm is difficult because it involves arduous calculation of subgradient values and inverse of control functions. Motivated by this inconvenience of gradient algorithm, we propose a pseudo-gradient algorithm, also an incremental algorithm based on both current $v(t)$ and its previous decision $q(t)$. We will characterize its convergence condition and make comparison among three different algorithms in terms of their convergence conditions and convergence rates. The comparison is conducted both analytically and numerically. The result shows that, compared with gradient algorithm, this pseudo-gradient algorithm has similar loose convergence condition, achieves a very close convergence rate, while being much easier to implement.

All the analytical characterization will be conducted based on an arbitrary radial feeder network and general control functions, and the simulations are on a distribution feeder of South California Edison, with piecewise linear control functions.

The rest of this paper is organized as follows. Section \ref{system-model} presents the system model, and briefly summarizes the major results of the non-incremental voltage control algorithm and the gradient based voltage control algorithm in \cite{Masoud-CDC13} and \cite{FZC2015}.  Section \ref{sec:pseu} presents the pseudo-gradient based voltage control algorithm and its convergence. The comparison among the three algorithms is presented in Section \ref{sec:comp}, and Section \ref{conclusion} concludes the paper.

\section{System Model}\label{system-model}
\subsection{Power flow model}
We adopt the following branch flow model \cite{Baran89, Masoud-TPS13} for a {radial} distribution system:
\begin{subequations}\label{eq:bfm}
	\begin{eqnarray}
	P_{ij} &=& p_j^c - p_j^g + \sum_{k: (j,k)\in \hL} P_{jk}+  r_{ij}  \ell_{ij}  \label{p_balance}, \\% , ~(i,j)  \in\hL \label{P}\\
	Q_{ij} &=&  q_j^c-q_j^g + \sum_{k: (j,k)\in \hL} Q_{jk} + x_{ij} \ell_{ij} \label{q_balance},\\%, ~(i,j)\in\hL \label{Q}\\
	v_j^2 &=&  v_i^2 - 2 \left(r_{ij} P_{ij} + x_{ij} Q_{ij}\right) + \left(r_{ij}^2+x_{ij}^2\right) \ell_{ij} \label{v_drop},\\
	%\nonumber\\ \\%&&\qquad\qquad\qquad\qquad\qquad\quad~~ (i, j)\in \hL; \label{v}\\
	\ell_{ij}v_i &=&   P_{ij}^2 + Q_{ij}^2  \label{currents}.%, \qquad\qquad\qquad~~~ (i, j)\in \hL \label{ell}
	\end{eqnarray}
\end{subequations}
Following  \cite{Baran89-2}, \cite{Masoud-CDC13}, we use a linearized version of the above model by letting  $\ell_{ij}=0$ for all $(i,j) \in \hL$ in \eqref{eq:bfm}. 
%to obtain a linearized branch flow model.   
This approximation neglects the higher order real and reactive power loss terms.
Since losses are typically much smaller than power flows $P_{ij}$ and $Q_{ij}$, this only
introduces  a small relative error, typically on the order of $1\%$ \cite{Baran89}. We further assume that $v_i \approx 1$ so that we can set $v_j^2 - v_i^2 = 2 (v_j - v_i)$
in equation \eqref{v_drop}.\footnote{Note that this assumption is not essential and we can also work with $v_i^2$ instead.} This approximation introduces a small relative error of at most $0.25\%$ ($1\%$) if there is a $5\%$ ($10\%$) deviation in voltage magnitude \cite{Masoud-CDC13}.
With the above approximations, the power flow model \eqref{eq:bfm} simplifies to the following linear model:

\begin{equation*}  
v = \overline{v}_0 + R (p^g - p^c) + X(q^g - q^c),
\end{equation*}
where $\overline{v}_0 = (v_0, \dots, v_0)$ is an $n$-dimensional vector, and { resistance matrix} $R=[R_{ij}]_{n\times n}$ and { reactance matrix} $X=[X_{ij}]_{n\times n}$ 
are symmetric matrices with entries
 
\begin{eqnarray}
R_{ij}:=  \sum_{(h,k)\in \hL_i \cap \hL_j}r_{hk},
% = \frac{dV_i}{dP^g_i} = - \frac{dV_i}{dP^c_i} 
%  \label{R_def},\\
\ \ \ \ X_{ij}:=   \sum_{(h,k)\in \hL_i \cap \hL_j}x_{hk}  \label{X_def}.
% = \frac{dV_i}{dQ^g_i} = - \frac{dV_i}{dQ^c_i}\label{X_def}
\end{eqnarray}

In this paper we assume that $\overline{v}_0, p^c, p^g, q^c$ are given constants.  The only 
variables are (column) vectors $v := (v_1, \dots, v_n)$ of squared voltage magnitudes
 and $q^g := (q^g_1, \dots, q^g_n)$ of reactive powers.
 Let  $\tilde{v} = \overline{v}_0 + R (p^g - p^c) - X q^c$,  which is a constant vector.
 For notational simplicity in the rest of the paper we will ignore the superscript in $q^g$ and write $q$
 instead.  Then the linearized branch flow model reduces to the following simple form:
\begin{equation}  \label{model_2}
v = Xq + \tilde{v}.
\end{equation}
It has been shown in  \cite{Masoud-CDC13} that the matrix $X$ is positive definite.

\subsection{Local volt/var control}

The goal of volt/var control on a distribution network is to provision reactive
power injections $q := (q_1, \dots, q_n)$ in order to maintain the bus voltages 
$v := (v_1, \dots, v_n)$ within a tight range around their nominal values 
$v^{\text{nom}}_i, ~i\in\hN$.
This can be modeled by a feedback dynamical system with state $(v(t), q(t))$ at discrete
time $t$.   A general volt/var control algorithm maps the current state $(v(t), q(t))$
to a new reactive power injections $q(t+1)$. 
The new $q(t+1)$ produces a new
voltage magnitudes $v(t+1)$ according to  \eqref{model_2}. Usually $q(t+1)$ is dertermined either completely or partly according to a certain volt/var control function defined as follows:

%\subsubsection{Non-incremental local voltage control}
%
%Motivated by the IEEE 1547.8 Standard \cite{new1547a, new1547b}, we have studied in \cite{Masoud-CDC13} a local volt/var control where each bus $i$
%makes an individual decision  $q_i(t+1)$ based only on its own voltage 
%$v_i(t)$.

%Usually $q(t+1)$ is decided either completely or partly according to the output of a volt/var control function defined as following:

%\begin{definition}
%A volt/var control function $f : \mathbb{R}^{n} \rightarrow  \Omega$ is a collection of 
%$f_i: \mathbb{R}\rightarrow\Omega_i $ functions, each of which maps the current local voltage 
%$v_i$ to a local control variable $o_i$ in reactive power:
%\begin{eqnarray} 
%o_i  \ = \ f_i (v_i),\quad \forall i\in \hN, \label{eq:ec}
%\end{eqnarray}
%where $\Omega = \prod\limits_{i=1}^ n \Omega_i$, with
%$\Omega_i = \left\{q_i  \ | \ {q_i}^{\min}  \leq q_i \leq {q_i}^{\max}\right\}$ the set of feasible 
%reactive power injections at each bus $i\in\hN$.
%\end{definition}

\begin{definition}
A volt/var control function $f : \mathbb{R}^{n} \rightarrow  \mathbb{R}^{n}$ is a collection of 
$f_i: \mathbb{R}\rightarrow\mathbb{R}$ functions, each of which maps the current local voltage 
$v_i$ to a local control variable $o_i$ in reactive power at bus $i$:
\begin{eqnarray} 
o_i  \ = \ f_i (v_i),\quad \forall i\in \hN.  \label{eq:ec}
\end{eqnarray}
%where $\Omega = \prod\limits_{i=1}^ n \Omega_i$, with
%$\Omega_i = \left\{q_i  \ | \ {q_i}^{\min}  \leq q_i \leq {q_i}^{\max}\right\}$ the set of feasible 
%reactive power injections at each bus $i\in\hN$.
\end{definition}
%The control algorithm (4) is non-incremental as the current decision does not depend directly on the decision at the previous time.
The control functions $f_i$ are usually decreasing but not always strictly decreasing because of the deadband in control, as well as the bounds of feasible reactive power injections. We assume for each bus $i\in\mathcal{N}$ a symmetric deadband around the nominal voltage $(v_i^{nom}-\delta_i/2, v_i^{nom}+\delta_i/2)$ with $\delta_i\geq 0$. 
%Let $\overline{v}_i:=\min \{v_i|f_i(v_i)=q_i^{min}\}$ and $\underline{v}_i:=\max \{v_i|f_i(v_i)=q_i^{max}\}$ be the maximum and minimum voltage triggering the reactive power on boundary.  
We make the following two assumptions \cite{Masoud-CDC13}:
\begin{enumerate}
%	\item [A1:]\label{A1} The local volt/var control functions $f_i$ are non-increasing over $\mathcal{R}$ and strictly decreasing and differentiable in $(\underline{v}_i,v_i^{nom}-\delta/2)$ and in $(v_i^{nom}+\delta_i/2,\overline{v}_i)$.
	\item [A1:]\label{A1} The volt/var control functions $f_i$ are non-increasing over $\mathcal{R}$ and strictly decreasing and differentiable in $(-\infty, v_i^{nom}-\delta/2)$ and in $(v_i^{nom}+\delta_i/2, \infty)$.
	\item [A2:]\label{A2} The derivative of the control function $f_i$ is bounded, i.e., there exists a finite $\alpha_i$ such that $|f_i^{\prime}(v_i)|\leq\alpha_i$ for all $v_i$ in the appropriate domain.
\end{enumerate}

As an illustrative example, see Fig. \ref{inverse} (left) for the piecewise linear droop control function proposed in the latest draft of IEEE 1547.8 Standard\cite{new1547a}:\footnote{Here we also use $\alpha_i$ to indicate the slope of the droop control function, which does not contradict the use of $\alpha_i$ in the condition A2.}
\begin{align} \label{eq:plf}
f_i(v_i) :=   \left[-\alpha_i \left( v_i - v_i^{\text{nom}} - \frac{\delta_i}{2}\right)^{+} 
+ \alpha_i \left( - v_i + v_i^{\text{nom}}- \frac{\delta_i}{2} \right)^{+} 
\right]^{{q_i}^{\max}} _{{q_i}^{\min}},
\end{align}
where the local control variable in reactive power is constrained to within $[{q_i}^{\min}, {q_i}^{\max}]$. This particular control function will be used in the numerical examples presented in Section \ref{sec:numerical}. %are based on this linear control function, while other parts of this paper will be based on general control functions, if no further declaration is made.
\begin{figure*}
	\centering
	\includegraphics[scale=0.4]{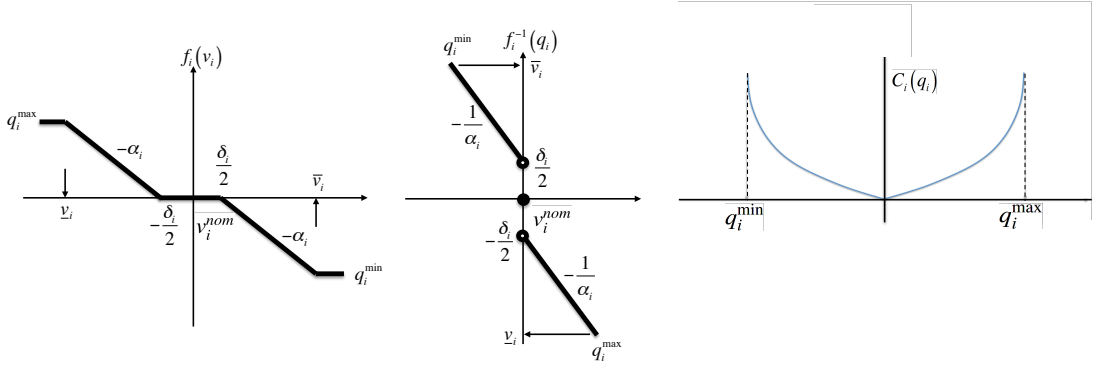}
	\caption{From left to right: piecewise linear volt/var control curve $f_i$ discussed in the draft of the IEEE 1547 standard  \cite{new1547a}, its inverse $f_i^{-1}$, and the corresponding reverse-engineered cost function $C_i$.}
	\label{inverse}
\end{figure*}

\subsubsection{Non-incremental control algorithm}
%In this subsection, we briefly summarize non-incremental algorithm and gradient algorithm (an incremental algorithm) for local volt/var control from \cite{Masoud-CDC13}\cite{FZC2015} respectively, from which we show how they motivate us to propose our new pseudo-gradient algorithm.
%\subsubsection{Non-incremental algorithm}
Motivated by IEEE Standard 1547 \cite{new1547a}\cite{new1547b}, we have studied in \cite{Masoud-CDC13} a local volt/var control where each
bus $i$ makes an individual decision $q_i (t + 1)$ based only on its own voltage $v_i (t)$, i.e., $q_i(t+1)=o_i(t)$, which we call {\em non-incremental control} as the current decision on reactive power injection does not depend directly on the decision at the previous time. We thus obtain the following dynamical  system that models the non-incremental local volt/var control in 
the distribution network:
\begin{eqnarray}
D1: \left |
  \begin{array}{l l l}
    v(t) &= &  Xq(t) + \tilde{v},\\
    q(t+1) &= & f(v(t)).
  \end{array} \right.
  \label{eq:dynamic1}
\end{eqnarray}

\begin{definition}
	$(v^*,q^*)$ is called an equilibrium point for $D1$, if
	\begin{eqnarray}
	v^* & = & Xq^*+\tilde{v},\nonumber\\
	q^* & = & f(v^*).\nonumber
	\end{eqnarray}
\end{definition}
By reverse engineering, we have shown in \cite{Masoud-CDC13} that the dynamical system $D1$ can be seen as a distributed optimization algorithm for solving a well-defined optimization problem
\begin{eqnarray}
&\underset{q\in\Omega}{\min}&F(q):=C(q)+\frac{1}{2}q^TXq+q^T\tilde{v}\label{eq:Fq},
\end{eqnarray}
where $C(q)=\sum_{i\in\mathcal{N}}C_i(q_i)$ with the convex cost function for each bus $i\in\mathcal{N}$ defined by $C_i(q_i):=-\int_{0}^{q_i}f_i^{-1}(q)dq$:
\begin{itemize}
\item There exists a unique equilibrium point $(v^*,q^*)$ under the condition $A1$; and $(v^*,q^*)$ is an equilibrium point if and only if $q^*$ is the unique optimal solution of (\ref{eq:Fq}). 
\item Under the conditions A1-A2, the dynamical system $D1$ converges to the unique equilibrium if the following condition $C1$ holds: 
\begin{eqnarray}
C1:~~ A^{-1} \succ X,
\end{eqnarray}
where $A^{-1}:=\mbox{diag}(\frac{1}{\alpha_i})$.
\end{itemize}

As mentioned in \cite{FZC2015}, the condition $C1$ is hard to verify in practice. First, it is a computationally demanding problem to verify a linear matrix inequality of potentially very large dimension. Second, matrix $X$ depends on the reactance of every line in the network, which is practically hard to obtain. Moreover,  $C1$ is rather restrictive in constraining allowable control functions, and the existing control schemes may not satisfy this condition. We therefore propose in \cite{FZC2015} an incremental voltage control based on the (sub)gradient algorithm for solving the optimization problem (\ref{eq:Fq}), which leads to a local var/volt control scheme that demands less restrictive condition for the control functions.

\subsubsection{Incremental control based on the gradient algorithm}

%As mentioned in \cite{FZC2015}, the condition $C1$ is hard to verify in practice. First, it is a computationally demanding problem to verify a linear matrix inequality of potentially very large dimension. Second, matrix $X$ depends on the reactance of every line in the network, which is practically hard to obtain. Moreover,  $C1$ is rather restrictive in constraining allowable control functions, and the existing control schemes may not satisfy this condition. We therefore propose in \cite{FZC2015} an incremental voltage control algorithm based in the (sub)gradient algorithm for solving the optimization problem (\ref{eq:Fq}), which results a local var/volt control scheme that demands less restrictive condition for convergence. 

%We can see that, $C1$ is too restricted in the sense that, the choice of control functions is limited by the network parameter $X$. And sometimes, we can hardly design control functions given certain network structure, not to mention the difficulty to achieve the values of all the reactance in the networks. Therefore, a (sub)gradient algorithm, which frees control function design from this network structures, is forward engineered to solve the same optimization problem (\ref{eq:Fq})\cite{FZC2015}.
%
%\subsubsection{Gradient algorithm}
By applying (sub)gradient algorithm to the optimization problem (\ref{eq:Fq}), we obtain the following dynamical system with an incremental local volt/var control for 
the distribution network:

\begin{eqnarray}
D2: \left | 
  \begin{array}{l l l }
    v(t) &= &  Xq(t) + \tilde{v},\\
     q_i(t+1) & = & \left[ q_i(t) - \gamma_g \frac{\partial F(q)}{\partial q_i}\right]_{q_i^{min}}^{q_i^{max}},
  \end{array} \right. \label{eq:dynamic2}
\end{eqnarray}
where $\gamma_g>0$ is the stepsize, and the (sub)gradient is calculated by
\begin{equation} \frac{\partial F(q)}{\partial q_i} = \left\{ \begin{array}{ll}
        C_i^{'} (q_i(t)) + v_i(t) & \mbox{if ~$q_i(t) \neq 0$;}\\
        0 & \mbox{if ~$q_i(t) = 0 ~,~ -\frac{\delta}{2} \leq v_i(t) \leq \frac{\delta}{2}$;}\\
        -\frac{\delta}{2} + v_i(t) & \mbox{if ~$q_i(t) = 0 ~,~ v_i(t) > \frac{\delta}{2}$;} \\
        \frac{\delta}{2} + v_i(t) & \mbox{if ~$q_i(t) = 0 ~,~  v_i(t) <  -\frac{\delta}{2}$.} \end{array} \right. \label{eq:sg}      
\end{equation}

%
%\begian{definition}
%	$(v^*,q^*)$ is called an equilibrium point for $D2$, if
%	\begin{eqnarray}
%	v^* & = & Xq^*+\tilde{v},\nonumber\\
%	\left.\frac{\partial F(q)}{\partial q_i}\right|_{q_i=q_i^*} &=& 0. \nonumber
%%	q_{D2}^* & = & f(v_{D2}^*)\nonumber
%	\end{eqnarray}
%\end{definition}
We have the following result \cite{FZC2015}: 
\begin{itemize}
\item Under the condition $A1$, the dynamical systems $D1$ and $D2$ have the same, unique equilibrium point $(v^*,q^*)$; and $(v^*,q^*)$ is an equilibrium point if and only if $q^*$ is the unique optimal solution of (\ref{eq:Fq}). 
\item Under the condition A1, the dynamical system $D2$ converges to the unique equilibrium if the following condition $C2$ on the stepsize holds: 
\begin{eqnarray}
C2:~~ \gamma_g < \frac{2}{\lambda_{max}(\nabla^2C(q)+X)},
\end{eqnarray}
where $\lambda_{max}$ denotes the maximum eigenvalue.
\end{itemize}

%It has been shown in Theorem 3 and 4 of reference \cite{FZC2015} that, a unique equilibrium point $(v^*,q^*)$, with $q^{*}$ solving (\ref{eq:Fq}), exists under A1 for $D2$. Also, under A1-A2, $D2$ converges to this unique equilibrium if its stepsize $\gamma_g$ satisfies a sufficient convergence condition
%\begin{eqnarray}
%	C2&:&\gamma_g < \frac{2}{\lambda_{max}(\nabla^2C(q)+X)}.
%\end{eqnarray}

Compared with $C1$, $C2$ is a much less restrictive. No matter what the reactance matrix $X$ is and no matter what the control function $f_i$ is (as long as it satisfies the condition A1), we can always find an appropriate stepsize $\gamma_g$ such that $D2$ converges to its unique equilibrium, which solves the optimization problem (\ref{eq:Fq}).

\subsection{Motivation for new control algorithm}  \label{sect:mot}
Despite the condition $C2$ being less restrictive, the above incremental voltage control based on the (sub)gradient algorithm incurs lots of implementation complexity. The (sub)gradient \eqref{eq:sg} requires tracking the value of $v_i$ with respect to $\pm \delta_i/2$, and takes different forms accordingly. Furthermore, it requires the computation of the inverse of the control function $f_i$, which is computationally expensive for a general control function. This high implementation complexity of the gradient algorithm motivates us to seek an incremental voltage control algorithm with less restrictive condition on the control function as well as low implementation complexity. In the next section, we will present such a control algorithm based on the pseudo-gradient algorithm for the optimization problem (\ref{eq:Fq}) and study its equilibrium and dynamical properties. 

%come up with a pseudo-gradient based local volt/var control algorithm in the next section, solving the same optimization problem (\ref{eq:Fq}), dodging inconvenience of gradient algorithm while preserving similar convergence property.

\section{pseudo-gradient based local voltage control}\label{sec:pseu}
%Similar to \cite{FZC2015} where we design an incremental control algorithm (forward engineering) based on the (sub)gradient algorithm for solving the optimization problem (\ref{eq:Fq}), in this section we propose 

Consider the following incremental local voltage control based on the pseudo-gradient algorithm for solving the optimization problem \eqref{eq:Fq}:  
\begin{eqnarray}
\nonumber q_i(t+1) &=&  [(1-\gamma_p)q_i(t)+\gamma_p f_i(v_i(t))]_{q_i^{min}}^{q_i^{max}}\nonumber\\
&=&\left[ q_i(t) - \gamma_p \Big(q_i(t)- f_i(v_i(t))\Big)\right]_{q_i^{min}}^{q_i^{max}}, \label{eq:pga}
\end{eqnarray}
where $\gamma_p >0$ is the stepsize or the weight. With the given control functions $f_i$, the implementation of the algorithm \eqref{eq:pga} is straightforward and does not have any implementation issues that the (sub)gradient based control algorithm in \eqref{eq:dynamic2} has; see the discussion in Section \ref{sect:mot}. It is also interesting to notice that, when the weight $\gamma_p=1$, we recover the non-incremental voltage control in \eqref{eq:dynamic1}.
%We will put related discuss later in next section. 

With the control \eqref{eq:pga}, we obtain the following dynamical system:
%The following equations generalizes D1 to a new dynamical system to model the volt/var control problem with a step size parameter $\gamma \in \mathbb{R}$ :
%\begin{eqnarray}
%D2: \left | 
%  \begin{array}{l l l }
%    v^{(t)} &= &  Xq^{(t)} + \tilde{v}\\
%     q_i^{(t+1)} & = & \left[ q_i(t) - \gamma \left( v_i(t) - f_i^{-1} (q_i(t)\right)\right]_{q_i^{min}}^{q_i^{max}}
%  \end{array} \right.
%\end{eqnarray}
\begin{eqnarray}
D3: \left | 
  \begin{array}{l l l }
    v(t) &= &  Xq(t) + \tilde{v},\\
     q_i(t+1) & = & \left[ q_i(t) - \gamma_p \Big(q_i(t)- f_i(v_i(t))\Big)\right]_{q_i^{min}}^{q_i^{max}}.
  \end{array} \right.
\end{eqnarray}
The dynamical system $D3$ has the same equilibrium condition as the dynamical system $D1$. 
%\begin{definition}
%	$(v^*,q^*)$ is called an equilibrium point for $D3$, if
%	\begin{eqnarray}
%	v^* & = & Xq^*+\tilde{v},\nonumber\\
%	q_{i}^* & = & f(v_{i}^*).\nonumber
%	\end{eqnarray}
%\end{definition}
The following result is immediate.
\begin{theorem}
\label{thm:eqnpg}
Suppose A1 holds.  There exists a unique equilibrium point for the dynamical system $D3$. 
Moreover, a point $(v^*, q^*)$ is an equilibrium if and only if $q^*$ is the unique
optimal solution of problem \eqref{eq:Fq} and $v^* = Xq^* + \tilde{v}$.
\end{theorem}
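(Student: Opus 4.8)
The plan is to reduce the statement to the equilibrium characterization of the non-incremental system $D1$ recalled earlier, by showing that $D3$ and $D1$ possess exactly the same set of equilibria. Once this is established, both the existence and uniqueness of the equilibrium and the ``if and only if'' characterization in terms of the minimizer of \eqref{eq:Fq} are inherited verbatim from the cited $D1$ result, so the theorem indeed follows immediately. The only genuine work is therefore to compare the two fixed-point conditions.

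First I would fix a point $(v^*,q^*)$ with $v^*=Xq^*+\tilde v$ and write the two equilibrium conditions componentwise: $(v^*,q^*)$ is an equilibrium of $D1$ iff $q_i^*=f_i(v_i^*)$ for every $i$, whereas it is an equilibrium of $D3$ iff $q_i^*=[\,q_i^*-\gamma_p(q_i^*-f_i(v_i^*))\,]_{q_i^{min}}^{q_i^{max}}$ for every $i$. The forward inclusion is trivial: if $q_i^*=f_i(v_i^*)$ then the pseudo-gradient term $q_i^*-f_i(v_i^*)$ vanishes, the bracket returns $q_i^*$, and since the control value is feasible, $q_i^*\in[q_i^{min},q_i^{max}]$, so the $D3$ relation holds. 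For the reverse inclusion I would use $\gamma_p>0$ together with the definition $[x]_a^b=x+(a-x)^+-(x-b)^+$ and argue by cases on whether the projection is active: when $q_i^*\in(q_i^{min},q_i^{max})$ the clipping is inactive and the relation collapses to $\gamma_p(q_i^*-f_i(v_i^*))=0$, i.e.\ $q_i^*=f_i(v_i^*)$; when a bound is active (say $q_i^*=q_i^{min}$), the sign condition forced by the projection, $q_i^*-f_i(v_i^*)\ge 0$, combined with $f_i(v_i^*)\in[q_i^{min},q_i^{max}]$ again forces $q_i^*=f_i(v_i^*)$. Equivalently, both conditions can be phrased as the variational inequality $(q_i^*-f_i(v_i^*))(q_i-q_i^*)\ge 0$ for all $q_i\in[q_i^{min},q_i^{max}]$, and one notes that $q_i^*-f_i(v_i^*)$ has the same sign as the true partial gradient $\partial F/\partial q_i=v_i^*-f_i^{-1}(q_i^*)$ precisely because $f_i$ is non-increasing; this is exactly why the pseudo-gradient map and the genuine projected-gradient map share their fixed points.

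The step I expect to demand the most care is the behaviour on the deadband and at the projection boundary, where $f_i$ is only non-increasing rather than strictly decreasing, so that $f_i^{-1}$ is set-valued and $C_i$ must be handled through its subdifferential; there I would verify the sign equivalence above from the monotone (non-strict) structure of $f_i$ rather than from literal invertibility, and check that the active-bound cases neither create nor destroy equilibria. With the equilibrium sets shown to coincide, existence and uniqueness transfer from the $D1$ result, which itself rests on $F$ being strictly convex---$X\succ 0$ makes $\tfrac{1}{2}\,q^T X q$ strictly convex and each $C_i$ is convex---over the compact convex box $\Omega$, so that \eqref{eq:Fq} admits a unique minimizer $q^*$. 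The claimed equivalence with \eqref{eq:Fq} and the relation $v^*=Xq^*+\tilde v$ then carry over directly to $D3$.
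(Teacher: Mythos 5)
Your proposal is correct and follows the same route the paper takes: the paper simply observes that $D3$ has the same equilibrium condition as $D1$ and declares the result immediate from the cited $D1$ characterization, which is exactly the reduction you carry out (you merely supply the case analysis on the projection that the paper leaves implicit). No gap to report.
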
 
%Under the A1 assumption and for any $\gamma \in \mathbb{R}$, D2 has a unique equilibrium point which is the same unique optimal solution of \eqref{eq:defminF}.

%\subsection{Convergence}
We now analyze the convergence of the dynamical system $D3$. 
\begin{lemma}\label{lemma1}
	Suppose $A1-A2$ hold. With any $q_a,q_b\in[q_i^{min},q_i^{max}]$, we have
	\begin{eqnarray}
		\Big((-f_i^{-1}(q_a))-(-f_i^{-1}(q_b))\Big)(q_a-q_b)\geq \frac{1}{\alpha_i}(q_a-q_b)^2.\label{eq:ass12}
	\end{eqnarray}
\end{lemma}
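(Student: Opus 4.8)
The plan is to reduce \eqref{eq:ass12} to two elementary facts about $f_i$: that it is $\alpha_i$-Lipschitz, and that its inverse is monotone. It helps to first recognize that $-f_i^{-1}$ is exactly the derivative of the convex cost $C_i(q_i)=-\int_0^{q_i} f_i^{-1}(q)\,dq$, so \eqref{eq:ass12} is precisely the assertion that $-f_i^{-1}$ is strongly monotone with modulus $1/\alpha_i$ (equivalently, that $C_i$ is $1/\alpha_i$-strongly convex). I would prove it directly, however, by passing to the voltage variable. Set $v_a := f_i^{-1}(q_a)$ and $v_b := f_i^{-1}(q_b)$, so that $q_a = f_i(v_a)$ and $q_b = f_i(v_b)$; then $(-f_i^{-1}(q_a))-(-f_i^{-1}(q_b)) = v_b - v_a$, and the left-hand side of \eqref{eq:ass12} is $(v_b - v_a)(q_a - q_b)$.

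First I would establish the expansion bound. By A2 the derivative of $f_i$ is bounded by $\alpha_i$ wherever it exists, and since $f_i$ is continuous and non-increasing this makes $f_i$ globally $\alpha_i$-Lipschitz; hence $|q_a - q_b| = |f_i(v_a) - f_i(v_b)| \le \alpha_i\,|v_a - v_b|$, i.e. $|v_a - v_b| \ge \frac{1}{\alpha_i}|q_a - q_b|$. Next I would pin down the sign: because $f_i$ is non-increasing, $q_a > q_b$ forces $v_a < v_b$ (and symmetrically), so $v_b - v_a$ and $q_a - q_b$ always carry the same sign. Combining the two, $(v_b - v_a)(q_a - q_b) = |v_b - v_a|\,|q_a - q_b| \ge \frac{1}{\alpha_i}(q_a - q_b)^2$, which is exactly \eqref{eq:ass12}.

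The step demanding the most care is the meaning of $f_i^{-1}$ on the flat pieces of $f_i$ --- the deadband, where $f_i\equiv 0$, and the saturation levels $q_i^{\min}$ and $q_i^{\max}$ --- on which $f_i^{-1}$ is set-valued, so $v_a$ and $v_b$ must be understood as selections from the corresponding preimages. I would argue that the bound is insensitive to the selection: the Lipschitz estimate $|f_i(v_a)-f_i(v_b)|\le\alpha_i|v_a-v_b|$ holds for \emph{every} pair of preimages, and the sign relation survives because a non-increasing $f_i$ induces a (set-valued) non-increasing inverse. When $q_a = q_b$ --- including the case in which both lie in the deadband and the preimages are whole intervals --- both sides of \eqref{eq:ass12} vanish, so nothing is lost. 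Thus the entire claim rests on the Lipschitz bound from A2 together with monotonicity from A1, and no differentiability at the kink points is ever needed.
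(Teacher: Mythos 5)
Your proof is correct and takes essentially the same route as the paper's: both transfer the derivative bound of A2 into the expansion bound $|f_i^{-1}(q_a)-f_i^{-1}(q_b)|\geq \frac{1}{\alpha_i}|q_a-q_b|$ and combine it with the monotonicity from A1 to fix the sign of the product. The only difference is organizational --- the paper handles the flat (deadband) portion of $f_i$ by a case split on the signs of $q_a$ and $q_b$, while you note that the Lipschitz and sign estimates hold for every selection from the preimage --- and your treatment of that point is, if anything, slightly more careful.
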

\begin{proof}
By the condition $A2$, we have the bound on the derivative of the control function $|f_i^{\prime}(v_i)|\leq\alpha_i$, and thus the bound for its inverse $|(-f^{-1}_i(q_i))^{\prime}|=|\frac{1}{f^{\prime}(f^{-1}(q_i))}|\geq\frac{1}{\alpha_i}$.

If $q_a$ and $q_b$ are both positive (or both negative), then the corresponding $v_a=f_i^{-1}(q_a)$ and $v_b=f_i^{-1}(q_b)$ are both smaller than $v_i^{nom}-\delta_i/2$ (or larger than $v_i^{nom}+\delta_i/2$). We thus have $|(-f_i^{-1}(q_a))-(-f_i^{-1}(q_b))|\geq \frac{1}{\alpha_i}|q_a-q_b|$. Equality is achieved if the linear control function (\ref{eq:plf}) is used. 
On the other hand, if one of $q_a$ and $q_b$ is positive and the other is negative, then as long as $\delta\neq 0$ we have $|(-f_i^{-1}(q_a))-(-f_i^{-1}(q_b))|> \frac{1}{\alpha_i}|q_a-q_b|$.
Combined with the monotonicity of $f^{-1}$, the inequality (\ref{eq:ass12}) follows.
%Moreover, it's easy to verify the strict monotonicity of function $-f^{-1}_i$ guaranteed by A1. As a result, (\ref{eq:ass12}) follows.
\end{proof}

\begin{theorem}\label{thm:mcon2}
	Suppose A1-A2 hold. If the stepsize $\gamma_p$ satisfies the following condition $C3$: 
	\begin{eqnarray}
	C3:~~ \gamma_p < \frac{2}{\max\{\alpha_i\}\lambda_{\mbox{max}}(\nabla^2 C(q)+X)}, \label{eq:convcon2}
	\end{eqnarray}
	then the dynamical system $D3$ converges to its unique equilibrium.
\end{theorem}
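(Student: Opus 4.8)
The plan is to recognize $D3$ as a \emph{scaled} projected (sub)gradient iteration for the strongly convex program \eqref{eq:Fq}, so that convergence reduces to showing that the error $q(t)-q^{*}$ contracts toward the unique minimizer $q^{*}$ of Theorem \ref{thm:eqnpg}. The first step is to link the pseudo-gradient direction $q_i-f_i(v_i)$ to the true gradient $\partial F/\partial q_i = C_i'(q_i)+v_i = v_i-f_i^{-1}(q_i)$. Writing $q_i=f_i(f_i^{-1}(q_i))$ and applying the mean value theorem gives, for some $\xi_i$, the identity $q_i-f_i(v_i)=f_i'(\xi_i)\big(f_i^{-1}(q_i)-v_i\big)=|f_i'(\xi_i)|\,\partial F/\partial q_i$, with $|f_i'(\xi_i)|\in[0,\alpha_i]$ by $A2$. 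Hence the pseudo-gradient is a nonnegative, coordinatewise rescaling of the genuine gradient, and setting $B:=\diag(|f_i'(\xi_i)|)$ one step of $D3$ reads $q(t+1)=\big[q(t)-\gamma_p B\,\nabla F(q(t))\big]_{q^{min}}^{q^{max}}$, a scaled projected gradient step with $0\preceq B\preceq(\max_i\alpha_i)\,I$.

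Next I would form the error recursion. Subtracting the equilibrium relation $q^{*}=\big[q^{*}-\gamma_p(q^{*}-f(v^{*}))\big]_{q^{min}}^{q^{max}}$, using that the box projection is nonexpansive, and applying the mean value theorem a second time (now to $f$ between $v$ and $v^{*}$, with $v-v^{*}=X(q-q^{*})$) gives $\|q(t+1)-q^{*}\|\le\big\|\,[\,I-\gamma_p B_t(\nabla^2 C+X)\,]\,(q(t)-q^{*})\big\|$, where $B_t^{-1}$ collects the local inverse-slopes, which by Lemma \ref{lemma1} coincide with the diagonal entries of $\nabla^2 C$ (each at least $1/\alpha_i$). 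The matrix $I-\gamma_p B_t(\nabla^2 C+X)$ is not symmetric, but the congruence by $B_t^{1/2}$ turns it into $I-\gamma_p B_t^{1/2}(\nabla^2 C+X)B_t^{1/2}$, whose eigenvalues are $1-\gamma_p\mu$ with $0<\mu\le\lambda_{\max}(B_t)\,\lambda_{\max}(\nabla^2 C+X)\le(\max_i\alpha_i)\,\lambda_{\max}(\nabla^2 C+X)$. Condition $C3$ is precisely what forces $0<\gamma_p\mu<2$, hence $|1-\gamma_p\mu|<1$, so each step strictly contracts in the $B_t^{-1}$-weighted norm. For the piecewise-linear control \eqref{eq:plf} the slope is constant, so $B_t\equiv\diag(\alpha_i)$ and $B_t^{-1}=\nabla^2 C$, making this a genuine fixed-matrix contraction and completing the argument in that case.

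The main obstacle is the general, nonlinear control function: then $B_t$ varies from iteration to iteration, so the norm in which the contraction holds changes, and it degenerates on the deadband, where $f_i'=0$ forces $B_t^{-1}$ to blow up; one therefore cannot simply chain a fixed contraction factor in a fixed norm. I would close this gap with a Lyapunov argument that needs only the upper bound $B_t\preceq(\max_i\alpha_i)I$: the descent lemma for the $\lambda_{\max}(\nabla^2 C+X)$-smooth function $F$ together with the scaled step yields a sufficient-decrease inequality $F(q(t+1))\le F(q(t))-\gamma_p\,\nabla F(t)^{\!T}B_t\big(I-\tfrac{\gamma_p}{2}\lambda_{\max}(\nabla^2 C+X)B_t\big)\nabla F(t)$, whose bracket is positive definite exactly under $C3$. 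Thus $F(q(t))$ is non-increasing and $\sum_t\|q(t+1)-q(t)\|^2<\infty$, whence $q(t+1)-q(t)\to 0$.

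To convert this into $q(t)\to q^{*}$ I would use that $\Omega=\prod_i[q_i^{min},q_i^{max}]$ is compact and that box projection acts coordinatewise (so the scaled and Euclidean projections agree): every subsequential limit of $\{q(t)\}$ is then a fixed point of the continuous $D3$ map, hence equals the unique equilibrium $q^{*}$ of Theorem \ref{thm:eqnpg}, and compactness upgrades this to full convergence. The two points that need care are preserving the factor of two in $C3$ — which requires the sharp scaled-step estimate above rather than the cruder projected-gradient bound — and the deadband coordinates; both are accommodated cleanly by the Lyapunov-plus-compactness route, while the contraction computation of the second paragraph supplies the convergence rate.
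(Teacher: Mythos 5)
Your Lyapunov/sufficient-decrease argument is essentially the paper's own proof: the mean-value identity $q_i-f_i(v_i)=|f_i'(\xi_i)|\,\partial F/\partial q_i$ with $|f_i'|\le\alpha_i$ is exactly the content of Lemma~\ref{lemma1}, and your decrease inequality is the paper's \eqref{eq:conlya} with the Taylor remainder $\nabla^2C(\tilde q)+X$ relaxed to its largest eigenvalue, yielding the same condition $C3$. The additional contraction-in-weighted-norm computation (which, as you note, only closes for the piecewise-linear case) and the compactness/subsequential-limit closing step are extras beyond the paper's direct appeal to the Lyapunov stability theorem, but the operative mechanism is the same.
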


\begin{proof}
We first consider the case when $q_i(t)\neq 0,\:\forall i$, i.e.,  when objective function $F$ is differentiable. By the second-order Taylor expansion, we have
\begin{eqnarray}
&&F(q(t+1))\nonumber\\
&=& F(q(t))-\gamma_p\sum_i (-f^{-1}_i(q_i(t))+v_i(t))(q_i(t)-f_i(v_i(t)))\nonumber\\
&&+\frac{\gamma_p^2}{2}(q(t)-f(v(t)))^T(\nabla^2C(\tilde{q})+X)(q(t)-f(v(t))),\nonumber\\
\end{eqnarray}
where $f(v(t)):=\Big(f_1(v_1(T)),\ldots,f_n(v_n(t))\Big)^T$, and $\tilde{q}=\theta q(t)+(1-\theta)q(t+1)$ for some $\theta\in [0,1]$.
By Lemma \ref{lemma1}, we have  $(-f^{-1}_i(q_i(t))+v_i(t))(q_i(t)-f_i(v_i(t)))\geq \frac{1}{\alpha_i}(q_i(t)-f_i(v_i(t)))^2$. 
Thus the Taylor expansion follows as
\begin{eqnarray}
&&F(q(t+1))\nonumber\\
&\leq & F(q(t))+\frac{1}{2}(q(t)-f(v(t)))^T\nonumber\\
&&(\gamma_p^2(\nabla^2C(\tilde{q})+X)-2\gamma_p A^{-1})(q(t)-f(v(t))).\nonumber\\\label{eq:conlya}
\end{eqnarray}
When the condition $C3$ holds, $\gamma^2(\nabla^2C(\tilde{q})+X)-2\gamma A^{-1}$ is always negative definite. As a result, the second term in (\ref{eq:conlya}) is always non-positive. In fact, this part is equal to zero if and only if $q(t)=f(v(t))$, or equivalently, $q(t)=q(t+1)$. Therefore $F(q(t+1))\leq F(q(t))$, where the equality is obtained if and only if $q(t+1)=q(t)$. Besides, because of the uniqueness of the equilibrium point as shown in Theorem \ref{thm:eqnpg}, $F(q(t+1))= F(q(t))$ if and only if $q(t+1)=q(t)=q^*$. So, $F$ can be seen as a discrete-time Lyapunov function for the dynamical system $D3$, and by the Lyapunov stability theorem, the equilibrium $q^*$ is globally asymptotically stable.
 
Next we consider the case when $q_i(t)=0$ for some $i$. For bus $i$ with $q_i(t)=0$, the dynamics, irrelevant of derivative, is still well-defined, giving $q_i(t+1)=\gamma f_i(v_i(t))=0$. However, its Taylor expansion involves the derivative of $C_i(q(t))$, which doesn't exist at $q_i=0$. We thus assign subgradient value for bus $i$ as $\frac{\partial F(q)}{\partial q_i}\big|_{q_i=0} =0$, and then the proof follows similarly with this well-defined Taylor expansion, and the conclusion holds as well.
\end{proof}

Theorem \ref{thm:mcon2} shows that the pseudo-gradient based local voltage control has the same advantage as the gradient based control, as opposed to the nonincremental voltage control; and in particular, its convergence condition does not restrict the allowable control functions $f_i$. % also free from network structures and control functions design by choosing a stepsize $\gamma_p$ small enough to meet $C3$. 
We will provide more detailed comparison between the three algorithms in the next section. 

{\em Remarks:}~Notice that in the pseudo-gradient algorithm it is usually assumed that  $\gamma_p\leq 1$. This gives a nice interpretation of the new decision $q_i(t+1)$ being a convex combination of the previous decision $q_i(t)$ and the local control $o_i(t)=f_i(v(t))$ in reactive power. However, here we do not require $\gamma_p\leq 1$, as long as the condition $C3$ is met.

\section{Comparative Study of Convergence Conditions and Rates}\label{sec:comp}
We have presented three different local voltage control algorithms in the previous two sections. In this section, we compare these three control schemes regarding the corresponding convergence conditions and convergence rates. As we will see, the gradient and pseudo-gradient based algorithms have very close performance in terms of convergence. %Both analytical characterization and numerical examples are provided for illustration.
So, as discussed in the previous sections, the advantage of the pseudo-gradient based algorithm over the gradient based algorithm is its much lower implementation complexity. However, this low implementation complexity provides strong enough motivation for adopting the pseudo-gradient based local voltage control in the distribution network. 
%We have so far shown and proved the convergence conditions (\ref{eq:con}), (\ref{eq:convcon}), and (\ref{eq:convcon2}) for $D1$, $D2$, and $D3$ respectively. For convenience, we rewrite those conditions marked with $C1$, $C2$, and $C3$ as followings:
%\begin{enumerate}
%	\item non-incremental algorithm $C1$
%		 	\begin{eqnarray}
%				\nabla^2C-X\succ 0,\nonumber
%			\end{eqnarray}
%	\item gradient algorithm $C2$
%			\begin{eqnarray}
%				\gamma_g < \frac{2}{\lambda_{max}(\nabla^2C+X)}, \ \mbox{and}\nonumber
%			\end{eqnarray}
%	\item pseudo-gradient algorithm $C3$
%			\begin{eqnarray}
%				\gamma_p<\frac{2}{\max(\alpha_i)\lambda_{max}(\nabla^2C+X)},\nonumber
%			\end{eqnarray}
%\end{enumerate}
%In the following parts, conclusions will be based on the piece-wise linear control functions.

%\subsection{Observation}
%\subsubsection{D1 is a special case of D3} It is easy to show that if we set $\gamma_p=1$ in $D3$, $D3$ is equivalent to D1. Moreover, $C1$ and $C3$ are identical once linear model $\nabla^2C=M^{-1}=A^{-1}$ is applied.
%
%\subsubsection{D2 and D3 are equivalent when C1 doesn't hold, or otherwise, D2 can be more aggressive than D3 by taking larger steps}
\subsection{Analytical characterization}
We start with showing the relationship between the dynamical systems $D1$ and $D3$. The following result is immediate. 
\begin{proposition}\label{prop1}
The non-incremental voltage control in the dynamical system $D1$ is a special case of the control in $D3$ with the stepsize $\gamma_p=1$. 
%Moreover, when $C1$ holds, $D3$ can converge with ; when $C1$ doesn't hold, $D1$ is not guaranteed convergence, but $D3$ still converges with small enough $\gamma_p$.
\end{proposition}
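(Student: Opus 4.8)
The plan is to prove Proposition \ref{prop1} by direct substitution of $\gamma_p=1$ into the dynamical system $D3$ and observing that it collapses syntactically onto $D1$. Since the claim is a purely algebraic identification between two update rules rather than a convergence or existence statement, the proof should be short and require no limiting arguments.

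First I would write out the reactive power update in $D3$ in its convex-combination form, namely
\begin{eqnarray*}
q_i(t+1) &=& \left[(1-\gamma_p)q_i(t)+\gamma_p f_i(v_i(t))\right]_{q_i^{min}}^{q_i^{max}}.
\end{eqnarray*}
Setting $\gamma_p=1$ annihilates the $(1-\gamma_p)q_i(t)$ term, leaving
\begin{eqnarray*}
q_i(t+1) &=& \left[f_i(v_i(t))\right]_{q_i^{min}}^{q_i^{max}}.
\end{eqnarray*}
I would then note that the saturation operator $[\,\cdot\,]_{q_i^{min}}^{q_i^{max}}$ is already built into the control function $f_i$ as defined in \eqref{eq:plf}, whose output is clamped to $[q_i^{min},q_i^{max}]$; hence $[f_i(v_i(t))]_{q_i^{min}}^{q_i^{max}}=f_i(v_i(t))$. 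This yields exactly $q_i(t+1)=f_i(v_i(t))$, which is the non-incremental update of $D1$ in \eqref{eq:dynamic1}. Since the voltage equation $v(t)=Xq(t)+\tilde v$ is common to both systems, the two dynamical systems coincide, establishing the proposition.

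The only subtlety worth flagging, and the step I would treat with the most care, is the interaction between the outer saturation bracket in $D3$ and the saturation already implicit in the definition of $f_i$. If one interprets $f_i$ as the unconstrained control law (without clamping), then the bracket in $D3$ is what enforces feasibility, and one must verify that the composite map with $\gamma_p=1$ reproduces $D1$'s constrained feasible injection rather than an unbounded one. Either way the conclusion holds, but I would state explicitly which convention is used so the identification is unambiguous. Given the remark already made in the excerpt just after the definition of $D3$ ("when the weight $\gamma_p=1$, we recover the non-incremental voltage control"), this proposition is essentially a formalization of an observation the authors have already flagged, so the main obstacle is merely presentational clarity rather than any genuine mathematical difficulty.
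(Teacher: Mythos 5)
Your proof is correct and matches the paper's treatment: the paper states Proposition \ref{prop1} as ``immediate'' (having already remarked after \eqref{eq:pga} that $\gamma_p=1$ recovers the non-incremental control), and your direct substitution showing the update collapses to $q_i(t+1)=f_i(v_i(t))$ is exactly that observation made explicit. Your flagged subtlety about the idempotence of the outer saturation on an already-clamped $f_i$ is a reasonable point of care but does not change the argument.
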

%\begin{proof}
%By setting $\gamma_p=1$ in $D3$, $D3$ has exactly the same form as $D1$.
%\end{proof}

As a result of Proposition \ref{prop1}, when the condition $C1$ holds, the largest stepsize that $D3$ can take is no smaller than 1. On the other hand, if the condition $C3$ gives a upper bound for $\gamma_p$ that is smaller than 1, $D1$ will not converge. 

Next we investigate the relationship between the dynamical systems $D2$ and $D3$, in terms of the available ranges of the step sizes $\gamma_g$ and $\gamma_p$ for convergence and the convergence speed by looking at the largest decrease in the objective value they can make.

\begin{proposition}\label{prop2}
	The dynamical systems $D2$ and $D3$ have same (one-to-one corresponding) ranges for the step sizes $\gamma_g$ and $\gamma_p$ for convergence, i.e., for any $\gamma_p\in(0,B_p]$, there exists a corresponding $\gamma_g=\max\{a_i\}\gamma_p\in(0,B_g]$, where $B_g$ and $B_p$ are upper bounds on the stepsize for $D2$ and $D3$ to converge.
\end{proposition}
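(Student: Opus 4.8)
The plan is to extract the two stepsize bounds directly from the established convergence conditions and show they differ by the single scalar factor $\max\{\alpha_i\}$ (which equals $\max\{a_i\}$, since $A=\diag(\alpha_i)$). Condition $C2$ identifies the upper bound for $D2$ as $B_g=2/\lambda_{\max}(\nabla^2C(q)+X)$, and condition $C3$ identifies the upper bound for $D3$ as $B_p=2/\big(\max\{\alpha_i\}\,\lambda_{\max}(\nabla^2C(q)+X)\big)$. Since both are controlled by the same spectral quantity $\lambda_{\max}(\nabla^2C(q)+X)$, I would immediately record the identity $B_g=\max\{\alpha_i\}\,B_p$.

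With this identity in hand, the correspondence is just a change of variable. I would introduce the map $\gamma_p\mapsto\gamma_g:=\max\{\alpha_i\}\,\gamma_p$; because $\max\{\alpha_i\}>0$ this is a strictly increasing bijection of $(0,\infty)$ onto itself, and it carries the admissible interval $(0,B_p]$ exactly onto $(0,\max\{\alpha_i\}B_p]=(0,B_g]$. Hence each admissible $\gamma_p$ for $D3$ pairs with a unique admissible $\gamma_g$ for $D2$ and conversely, which is the asserted one-to-one correspondence.

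The only place needing a word of justification — and the reason the two conditions share the factor $\lambda_{\max}(\nabla^2C(q)+X)$ — lies in the two Lyapunov analyses. In $D2$ the second-order term requires $\gamma_g^2(\nabla^2C(\tilde q)+X)-2\gamma_g I\prec 0$, whereas the proof of Theorem \ref{thm:mcon2} requires $\gamma_p^2(\nabla^2C(\tilde q)+X)-2\gamma_p A^{-1}\prec 0$, with the diagonal $A^{-1}=\diag(1/\alpha_i)$ coming from Lemma \ref{lemma1}. The cleanest way to finish is to substitute $\gamma_g=\max\{\alpha_i\}\,\gamma_p$ and verify directly that the $D3$ requirement becomes the $D2$ requirement. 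I do not expect a real obstacle here; the one point to handle carefully is the direction of the bound $A^{-1}\succeq(1/\max\{\alpha_i\})I$ — it is $\max$, not $\min$, that makes the negative-definiteness in $C3$ equivalent to that in $C2$ after rescaling, and getting this right is what fixes the correspondence factor as $\max\{\alpha_i\}$.
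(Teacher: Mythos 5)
Your proposal is correct and follows essentially the same route as the paper, whose proof simply asserts that the result follows from the sufficiency arguments for $C2$ and $C3$; you have merely made explicit the step the paper leaves implicit, namely reading off $B_g = 2/\lambda_{\max}(\nabla^2 C(q)+X)$ and $B_p = B_g/\max\{\alpha_i\}$ and observing that $\gamma_g = \max\{\alpha_i\}\gamma_p$ is the resulting bijection. Your closing remark about the direction of the bound $A^{-1}\succeq (1/\max\{\alpha_i\})I$ is exactly the point that makes the two Lyapunov conditions match after rescaling, so nothing is missing.
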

\begin{proof}
	The result follows from the proofs of the sufficiency of the conditions $C2$ and $C3$ for convergence. 
\end{proof}

%\begin{proposition}
%	The dynamical system $D2$ converges less fast than the dynamical system $D3$.
%\end{proposition}
%\begin{proof}
%	Let $S2$ and $S3$ denote the largest step $|q_i(t+1)-q_i(t)|$ that $D2$ and $D3$ can make, with values of $\gamma_g$ and $\gamma_p$ chosen to be their corresponding upper bounds. For $D2$, For bus $i$ with $D2$, by setting $\gamma_g=\frac{2}{\lambda_{max}(\nabla^2C+X)}$, we have
%	
%	\begin{eqnarray}
%	S_2&:=&|q_i(t+1)-q_i(t)|\nonumber\\
%	&=&\left|\frac{2}{\lambda_{max}(A^{-1}+X)}(v_i(t)-f_i^{-1}(q_i(t)))\right|.\nonumber
%	\end{eqnarray}
%	On the other hand, by setting $\gamma_p=\frac{2}{\lambda_{max}(M)\lambda_{max}(\nabla^2C+X)}$, $D3$ gives 
%	\begin{eqnarray}
%	S_3&:=&|q_i(t+1)-q_i(t)|\nonumber\\
%	&=&\left|\frac{2}{\max(\alpha_i)\lambda_{max}(A^{-1}+X)}(q_i(t)-f_i(v_i(t)))\right|\nonumber\\
%	&\leq&\frac{1}{\alpha_i}\left|\frac{2}{\lambda_{max}(A^{-1}+X)}(q_i(t)-f_i(v_i(t)))\right|\nonumber\\
%	&\leq& \left|\frac{2}{\lambda_{max}(A^{-1}+X)}(v_i(t)-f_i^{-1}(q_i(t)))\right|\nonumber\\
%	&=& S_2,\nonumber
%	\end{eqnarray}
%	where the last inequality comes from Lemma \ref{lemma1}. So, $D2$ is more aggressive than $D3$, and they are equally aggressive only when the width of deadband is zero, and also the control functions have the same slope $\alpha_i=\alpha,\ \forall i$.
%\end{proof}
	
Based on the proofs of the sufficiency of the conditions $C2$ and $C3$ for convergence, we also expect the similar convergence speed to the equilibrium of the dynamical systems $D2$ and $D3$. Although it is difficult to compare the exact descent rates between them, we can compare the largest decreases in the objective value that are given by the second order Taylor expansions. By the second-order Taylor expansion, the descent of the pseudo-gradient algorithm is upper bounded by 
$$\left|\Big(q(t)-f(v(t))\Big)^T\Big(\gamma_p^2(\nabla C(\tilde{q})+X)-2\gamma_pA^{-1}\Big)\Big(q(t)-f(v(t))\Big)\right|,$$ while that of the gradient algorithm is by 
$$\left|\Big(v(t)-f^{-1}(q(t))\Big)^T\Big(\gamma_g^2(\nabla C(\tilde{q})+X)-2\gamma_pI\Big)\Big(v(t)-f^{-1}(q(t))\Big)\right|.$$
Notice that from the proof of Lemma \ref{lemma1}, there exists a factor of $\max\{\alpha_i\}$ between $(q(t)-f(v(t)))$ and $(v(t)-f^{-1}(q(t)))$. On the other hand, by Proposition \ref{prop2} there is a ``compensating'' factor $1/\max\{a_i\}$ from the one-to-one correspondence between the stepsizes $\gamma_p$ and $\gamma_g$. As a result, the above two decent terms are approximately the same with the stepsizes carefully chosen.

%\begin{corollary}
%	If heterogeneous stepsizes $\Gamma:=diag(\{\gamma_i\}_{i\in\mathcal{N}})$ can be applied, if the condition
%	\begin{eqnarray}
%	\Gamma \prec 2(AX+I)^{-1}\label{conhetero}
%	\end{eqnarray}
%	holds, $D3$ always converges.
%\end{corollary}
%Its proof follows similarly as that of Theorem \ref{thm:mcon2}. We can see that, with heterogeneous stepsizes, convergence condition becomes less restricted than that of homogeneous stepsize, because of the extra flexibility we have.
%
%\begin{corollary}
%	If homogeneous control function is applied for all bus $i$, we have a less restricted for stepsize for $D3$ than that with heterogeneous control functions as
%	\begin{eqnarray}
%		\gamma_p < \frac{2}{\alpha \lambda_{max}(\nabla^2C+X)}.
%	\end{eqnarray}
%	
%\end{corollary}

\subsection{Numerical examples}\label{sec:numerical}

 \begin{figure*}[t]
\centering
\includegraphics[scale=0.45]{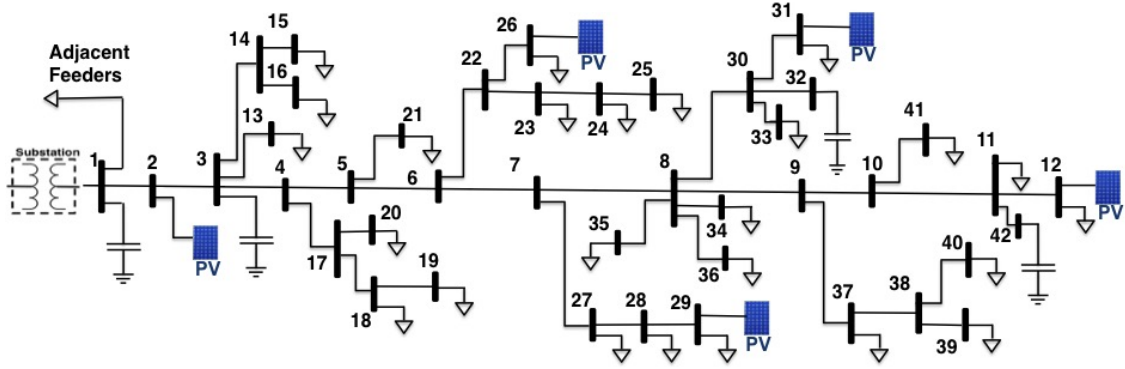}
\caption{Circuit diagram for SCE distribution system.}
 \label{42bus}
\end{figure*}

\begin{table*}
\caption{Network of Fig.~\ref{42bus}: Line impedances, peak spot load KVA, Capacitors and PV generation's nameplate ratings. }
\centering
%\scriptsize
%\small
%\begin{adjustbox}{max width=\textwidth}
\begin{tabular}{|c|c|c|c|c|c|c|c|c|c|c|c|c|c|c|c|c|c|}
\hline
\multicolumn{18}{|c|}{Network Data}\\
\hline
\multicolumn{4}{|c|}{Line Data}& \multicolumn{4}{|c|}{Line Data}& \multicolumn{4}{|c|}{Line Data}& \multicolumn{2}{|c|}{Load Data}& \multicolumn{2}{|c|}{Load Data}&\multicolumn{2}{|c|}{PV Generators}\\
\hline
From&To&R&X&From&To& R& X& From& To& R& X& Bus& Peak & Bus& Peak & Bus&Capacity \\
Bus.&Bus.&$(\Omega)$& $(\Omega)$ & Bus. & Bus. & $(\Omega)$ & $(\Omega)$ & Bus.& Bus.& $(\Omega)$ & $(\Omega)$ & No.&  MVA& No.& MVA& No.& MW\\
\hline
%line 1
1	&	2	&	0.259	&	0.808	&	8	&	34	&	0.244	&	0.046 	&	18	&	19	&	0.198	&	 0.046	&	11	&	 0.67	&	28	&	 0.27 	&				 &		 \\

%line 2
2	&	3	&	0.031	&	0.092	&	8	&	36	&	0.107	 &	0.031 	&	22	&	26	&	0.046	&	 0.015	&	12	&	0.45		 &	29	&	0.2 &			 2	&	 1\\

%line 3
3	&	4	&	0.046	&	0.092	&	8	&	30	&	0.076	 &	0.015 	&	22	&	23	&	0.107	&	 0.031	&	13	&	0.89	 &	31	&	0.27	  &			26	&	 2	 \\

%line 4
3	&	13	&	0.092	&	0.031	&	8	&	9	&	0.031	 &	0.031 	&	23	&	24	&	0.107	&	 0.031	&	15	&	0.07	 &	33	&	0.45	 	 &			29	&	 1.8 	 \\

%line 5
3	&	14	&	0.214	&	0.046	&	9	&	10	&	0.015	 &	0.015	&	24	&	25	&	0.061	&	 0.015	&	16	&	0.67	 &	34	&	1.34  &			31	&	 2.5 	 \\

%line 6
4	&	17	&	0.336	&	0.061	&	 9	&	37	&	0.153	 &	0.046 	&	27	&	28	&	0.046	&	 0.015	&	18	&	0.45	 &	35	&	0.13	  &			12	&	 3 	 \\

%line 7
4	&	5	&	0.107	&	0.183	&	10	&	11	&	0.107	 &	0.076 	&	28	&	29	&	0.031	&	0		&	19	&	1.23	 &	36	&	0.67	  &			&		 \\

%line 8
5	&	21	&	0.061	&	0.015	&	10	&	41	&	0.229	 &	0.122 	&	30	&	31	&	0.076	&	 0.015  &	20	&	0.45	 &	37	&	0.13	 	&			 &			 \\

%line 9
5	&	6	&	0.015	&	0.031	&	11	&	42	&	0.031	 &	0.015 	&	30	&	32	&	0.076	&	 0.046	&	21	&	0.2 &	39	&	0.45 	&			  &	 \\

%line 10		
6	&	22	&	0.168	&	0.061	&	11	&	12	&	0.076	 &	0.046 	&	38	&	39	&	0.107	&	 0.015	&	23	&	0.13		 &	40	&	0.2 	&		 &	\\

%line 11
6	&	7	&	0.031	&	0.046	&	14	&	16	&	0.046	 &	0.015 	&	38	&	40	&	0.061	&	0.015	&	24	&	0.13	 &	 41	&	0.45		&		&		 \\		\cline{15-18}

%line 12		
7	&	27	&	0.076	&	0.015	&	14	&	15	&	0.107	 &	0.015	&	43	&	44	&	0.061	&	0.015	&	 25	&	0.2 &	 \multicolumn{4}{c|}{$V_{base}$ = 12.35 KV}		 	\\

%line 13			 
7	&	8	&	0.015	&	0.015	&	17	&	18	&	0.122	 &	0.092	&	43	&	45	&	0.061	&	 0.015	&	 26	&	0.07 &	 \multicolumn{4}{c|}{$S_{base}$ = 1000 KVA} 	 	\\

%line 14			
8	&	35	&	0.046	&	0.015	&	17	&	20	&	0.214	 &	0.046	&	 	&	 	&	 	&	 	&	27	&	0.13		 &	 \multicolumn{4}{c|}{$Z_{base}$ = 152.52 $\Omega$}   	 	 \\	
	
\hline
\end{tabular}
\label{data}
%\end{adjustbox}
\end{table*}
%\end{centering}

We now provide some numerical examples to illustrate the difference between the convergence conditions and rates of the three algorithms based on piecewise linear droop control functions (\ref{eq:plf}). The network topology (Fig. \ref{42bus}) and parameters (TABLE \ref{data}) are based on a distribution feeder of South California Edison.  %with a high penetration of photovoltaic (PV) generation. 
As shown in Fig. \ref{42bus}, Bus 1 is the actual ``0" bus, and five PVs are installed on Bus 2, 12, 26, 29, and 31 respectively.\footnote{Unlike what is implied in the system model and its analysis, in practice we may not have control at all buses. As a result, the convergence conditions $C1$, $C2$ and $C3$ need to be modified accordingly, based on an ``effective'' reactance matrix that takes into consideration non-control buses.} AC power flow model is applied in our simulation, calculated with MatLab package MatPower\cite{zimmerman1997matpower}, instead of the linear model we use in the analytical characterization. 

The deadband for control function is chosen to be $[0.98^{p.u.},1.02^{p.u.}]$ for all buses, and the hard voltage threshold $\overline{v}_i$ and $\underline{v}_i$ in control function is designed to be a variable, adjusted for the purpose of comparison of convergence conditions by $\alpha_i=q_i^{max}/(\overline{v}_i-\delta/2)$. 

\subsubsection{Convergence condition} We start with observing the difference among the convergence conditions of the three algorithms.
\begin{itemize}
	\item We first present in Fig.\ref{allconverge} that, once we design control functions and stepsize such that, convergence conditions for $C2$ and $C3$ are met, the dynamical systems $D2$ and $D3$ converge monotonically to the same equilibrium. However, the dynamical system $D1$ converges but may not monotonically.
	\item We then change the slope of the control function such that we have a larger $A^{-1}$ (i.e., smaller $\alpha_i$). This will give $D2$ a more strict condition, and $D3$ a less strict one. Resultantly, as we see in Fig.\ref{convcon}(a), $D2$ no longer converge. However, by simply decreasing stepsize $\gamma_g$, $D2$ can be brought back to convergence, as shown in Fig. \ref{convcon}(b).
	\item Lastly, we change the slope of control function to get a smaller $A^{-1}$ (i.e., larger $\alpha_i$). This affects the convergence conditions for $D1$ and $D3$, while leaving that for $D2$ inviolated. Similarly, $D3$ can be back to convergence by having a smaller stepsize. This is shown in Fig. \ref{convcon}(c-d).
\end{itemize}

\begin{figure}[t]
	\begin{center}		
		\includegraphics[trim = 30mm 80mm 0mm 85mm, clip, scale=0.6]{Xinyang_f1.pdf}
		\caption{$D2$ and $D3$ both converge} \label{allconverge}
		\includegraphics[trim = 30mm 80mm 0mm 80mm, clip, scale=0.6]{Xinyang_f2.pdf}
		\caption{D2 and D3 can be brought back to convergence by changing stepsizes $\gamma_g$ and $\gamma_p$ to small enough values. } \label{convcon}
	\end{center}
\end{figure}

\subsubsection{Range of the stepsize for convergence}
Proposition \ref{prop2} shows that the upper bounds for the stepsizes in $D2$ and $D3$ are related with a factor $\max\{\alpha_i\}$. Since $\max\{\alpha_i\}$ is just a bound, it is interesting to see how tight it is. %will probably not appear in simulation results, but we still expect to see some close result between these two bounds. 
For the linear control function, $\max\{\alpha_i\}=\max(q_i^{max}/(\overline{v}-\delta/2))$, assuming we have universal and symmetric hard voltage threshold $\overline{v}-v^{nom}=v^{nom}-\underline{v}$. We tune $\overline{v}$ such that the value of $\overline{v}-v^{nom}$ ranges from $0.03^{p.u.}$ to $0.18^{p.u.}$ with granularity of $0.01^{p.u.}$, and value of $\max(\alpha_i)$ ranges from 158 to 9.84 accordingly. We examine the largest possible stepsize $\max(\gamma_g)$ and $\max(\gamma_p)$, and compare their ratio with the theoretical convergence boundary factor $\max(\alpha_i)$. The granularity for $\gamma_g$ and $\gamma_p$ is 1 and 0.05 respectively. The results in Fig.\ref{fig_range} illustrates that the simulated relationship of convergence ranges for gradient algorithm and pseudo-gradient algorithm is close to the theoretical one, which serves well as a conservative upper bound. It supports our analysis in Proposition \ref{prop2} that these two algorithms have a one-to-one corresponding convergence ranges for $\gamma_g$ and $\gamma_p$.

\begin{figure}[t]
	\begin{center}		
		\includegraphics[trim = 30mm 80mm 0mm 85mm, clip, scale=0.6]{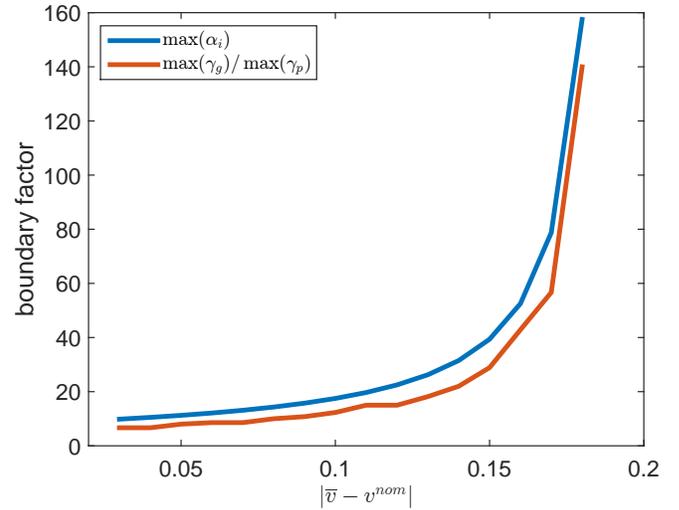}
		\caption{The upper bounds for $\gamma_g$ and $\gamma_p$ is related by a factor close to theoretical value $\max(\alpha_i)$.} \label{fig_range}
	\end{center}
\end{figure}

\subsubsection{Convergence rate}
We observe the convergence rates under certain fixed control functions, with stepsizes $\gamma_g$ and $\gamma_s$ tuned within convergence conditions.\footnote{Since simulations run under non-linear model, the boundary for stepsizes will usually be different from that obtained under linear model. We carefully choose the values of step size so that convergence results are still obtained.} Since $D1$ is a special case for $D3$ with $\gamma_p=1$ and fixed convergence rate, assuming it still fits the convergence condition, we won't specifically involve it. 

We fixed the hard voltage threshold as $\overline{v}_i=0.92^{p.u.}$, and $\underline{v}_i=1.08^{p.u.}$, change the stepsize until it reaches the convergence condition boundary. The results are shown in Fig. \ref{rate_gradient} and \ref{rate_pseugradient}. We can see that, the convergence rates for both gradient algorithm and pseudo-gradient algorithm increase monotonically with the stepsizes before they reach upper bounds, where oscillation of objective function value takes place, i.e., when $\gamma_g=11$, and $\gamma_p=0.6$, and starts bringing down the convergence rates. Also, both algorithms perform similarly in terms of convergence rates, with minimal number of steps less than 10.

\begin{figure}[t]
	\begin{center}		
		\includegraphics[trim = 25mm 80mm 0mm 85mm, clip, scale=0.55]{Xinyang_rate_gradient.pdf}
		\caption{Convergence rates of gradient algorithm with different stepsizes: larger $\gamma_g$ leads to faster convergence rates.} \label{rate_gradient}
		\includegraphics[trim = 25mm 80mm 0mm 85mm, clip, scale=0.55]{Xinyang_rate_pseugradient.pdf}
		\caption{Convergence rates of pseudo-gradient algorithm with different stepsizes: larger $\gamma_p$ leads to faster convergence rates.} \label{rate_pseugradient}
	\end{center}
\end{figure}

\section{conclusion} \label{conclusion}
Motivated by two previously proposed inverter-based local volt/var control algorithms, we have proposed a pseudo-gradient based voltage control algorithm for the distribution network that does not constrain the allowable control functions and has low implementation complexity. We characterize the convergence of the proposed voltage control scheme, and compare it against the two previous algorithms in terms of the convergence condition as well as the convergence rate.


\begin{thebibliography}{9}
% 2nd arg is the width of the widest label.

\bibitem{Masoud-CDC13}
M. Farivar, L. Chen, and S. Low, ``Equilibrium and dynamics of
local voltage control in distribution systems," { 52nd IEEE Annual Conference on Decision and Control (CDC)}, pp. 4329--4334, 2013.

\bibitem{FZC2015}
M. Farivar, X. Zhou, and L. Chen, ``Local Voltage Control in Distribution System: An Incremental Control Algorithm," {submitted for publication}, 2015.

\bibitem{new1547a}
Standards Coordinating Committee 21 of Institute of Electrical and Electronics Engineers, Inc., IEEE Standard P1547.8\texttrademark /D8, ``Recommended Practice for Establishing Methods and Procedures that Provide Supplemental Support for Implementation Strategies for Expanded Use of IEEE Standard 1547", IEEE ballot document, Aug 2014.

\bibitem{new1547b}
Institute of Electrical and Electronics Engineers, Inc., ``IEEE Standard 1547a\texttrademark (2014)
Standard for Interconnecting Distributed Resources with Electric Power Systems – Amendment 1", May 2014.


\bibitem{Sawin14}
J. L. Sawin, F. Sverrisson, ``Renewables 2014: Global Status Report", Paris: REN21 Secretariat REN21, 2014.


\bibitem{chertkov11} K. Turitsyn, P. Sulc, S. Backhaus, and M. Chertkov, ``Options for control of reactive power by distributed photovoltaic generators,"
{ Proceedings of the IEEE}, 99(6): 1063-"1073, 2011.


\bibitem{smith11} J. Smith, W. Sunderman, R. Dugan, and B. Seal, ``Smart inverter volt/var control functions for high penetration of pv on distribution systems,"  {\em IEEE Power Systems Conference and Exposition (PSCE)}, pp. 1-6, 2011.

\bibitem{Masoud-SGC11}
M.~Farivar, C.~R. Clarke, S.~H. Low, and K.~M. Chandy. ``Inverter {VAR} control for distribution systems with renewables,"
 { IEEE SmartGridComm}, pp. 457--462, 2011.


\bibitem{Garcia13} B. A. Robbins, C. N. Hadjicostis, and A. D. Dominguez-Garcia, ``A two-stage distributed architecture for voltage control in power distribution systems,", { IEEE Trans. on Power Systems}, 28(2): 1470-1482, 2013.


\bibitem{Jahangiri13}
P. Jahangiri, D. C. Aliprantis, ``Distributed Volt/VAr Control by PV Inverters," { Power Systems, IEEE Transactions on }, vol.28, no.3, pp.3429-3439, 2013


\bibitem{Sandia13}
J. Neely, S. Gonzalez, M. Ropp, D. Schutz, ``Accelerating Development of Advanced Inverters: Evaluation of Anti-Islanding Schemes with Grid
Support Functions and Preliminary Laboratory Demonstration," Sandia National Laboratories Technical Report SAND2013-10231, 2013.

%\bibitem{1547-old}
%T. Basso, ``IEEE Standard for Interrconnecting Distributed Resources With the Electric Power System," IEEE PES Meeting, p. 1, 2004.

%
%\bibitem{chertkov14} P. Sulc, S. Backhaus, M. Chertkov, ``Optimal Distributed Control of Reactive Power Via the Alternating Direction Method of Multipliers," IEEE Transactions on Energy Conversion, 29(4): 968-977, 2014
%
%\bibitem{Jabr06} R. A. Jabr, ``Radial distribution load flow using conic programming," IEEE Transactions on Power Systems, 21(3):1458-1459, 2006
%
%\bibitem{Lavaei12} J. Lavaei, S. H. Low, ``Zero Duality Gap in Optimal Power Flow Problem," IEEE Transaction on Power Systems, 27(1): 92-107, 2012.
%
%\bibitem{Lavaei14} M. Kraning, E. Chu, J. Lavaei, S. Boyd, ``Dynamic Network Energy Management via Proximal Message Passing", Foundations and Trends in Optimization, 1(2): 73-126, 2013.

\bibitem{Lijun-SGC14}
J. Shihadeh, S. You, L. Chen, ``Signal-anticipating in local voltage control in distribution systems," { IEEE SmartGridComm}, pp. 212-217 , 2014.

%\bibitem{Lina14}
%N. Li, G. Qu, M. Dahleh, ``Real-time decentralized voltage control in distribution networks," { 52nd Annual Allerton Conference on Communication, Control, and Computing}, pp. 582-588, 2014.

\bibitem{Kam14}
A. Kam, J. Simonelli, ``Stability of Distributed, Asynchronous VAR-based Closed-loop Voltage Control Systems", { IEEE PES General Meeting Conference \& Exposition}, 2014.

\bibitem{Andren14}
F. Andren, B. Bletterie, S. Kadam, P. Kotsampopoulos, C. Bucher, ``On the Stability of local Voltage Control in Distribution Networks with a High Penetration of Inverter-Based Generation," IEEE Transaction on Industrial Electronics, 62(4): 2519-2529, 2015


\bibitem {Baran89} M.  E.  Baran,  F.  F  Wu,   ``Optimal  Capacitor  Placement  on  radial distribution systems", IEEE Trans. Power Delivery, 4(1):725-734, 1989.

\bibitem{Baran89-2} M. E. Baran, F.F. Wu,  ``Network reconfiguration in distribution systems for loss reduction and load balancing", IEEE Transaction on Power Delivery, 4(2): 401-1407, 1989.

\bibitem{Masoud-TPS13}
M.~Farivar and S.~H. Low.
 ``Branch flow model: relaxations and convexification (parts {I, II})," {\em Power Systems, IEEE Transactions on}, 28(3):2554-2572, 2013.

\bibitem{bertsekas1989paralle}
D. P. Bertsekas, J. N. Tsitsiklis, ``Parallel and distributed computation: numerical methods",  Prentice hall Englewood Cliffs, NJ, 1989.

\bibitem{Khalil2002}
H. K. Khalil and J. W. Grizzle, ``Nonlinear systems'', 3rd Edition, Prentice hall, 200.

\bibitem{zimmerman1997matpower}
Zimmerman, Ray D and Murillo-S{\'a}nchez, Carlos E and Gan, Deqiang, ``MATPOWER: A MATLAB power system simulation package", Manual, Power Systems Engineering Research Center, Ithaca NY, vol. 1, 1997.



\end{thebibliography}
\end{document}